\documentclass[11pt,a4paper]{article}
%%% Load packages
%\usepackage{amsthm,amsmath}
%\RequirePackage{natbib}
%\RequirePackage{hyperref}
\usepackage[utf8]{inputenc} %unicode support

\usepackage{authblk}

%%%%%%%%%%%%%%%%%%%%%%%%%%%%%%%%%%%%%%%%%%%%%%%%%
%%                                             %%
%%  If you wish to display your graphics for   %%
%%  your own use using includegraphic or       %%
%%  includegraphics, then comment out the      %%
%%  following two lines of code.               %%
%%  NB: These line *must* be included when     %%
%%  submitting to BMC.                         %%
%%  All figure files must be submitted as      %%
%%  separate graphics through the BMC          %%
%%  submission process, not included in the    %%
%%  submitted article.                         %%
%%                                             %%
%%%%%%%%%%%%%%%%%%%%%%%%%%%%%%%%%%%%%%%%%%%%%%%%%

%\def\includegraphics{}
%\def\includegraphic{}
%
%

%%% Put your definitions there:
\usepackage{a4wide}
\usepackage{graphicx}
\usepackage{xspace}
\usepackage{amsthm}
\usepackage{amsfonts}
\usepackage{amsmath}
\usepackage{amssymb}
\usepackage{appendix}
\usepackage{color}
\usepackage{url}
\newcommand{\ilpeace}{\textsc{ilpeace}\xspace}
\newtheorem{theorem}{Theorem}

\newcounter{defcounter}
\newtheorem{definition}[defcounter]{Definition}

\newcommand{\R}{\mathbb{R}}

\newcommand{\cI}{\mathcal{I}}

\newcommand{\E}{\ensuremath{\mathbb{E}}\xspace}
\renewcommand{\S}{\ensuremath{\mathbb{S}}\xspace}
\newcommand{\D}{\ensuremath{\mathbb{D}}\xspace}
\newcommand{\T}{\ensuremath{\mathbb{T}}\xspace}
\renewcommand{\L}{\ensuremath{\mathbb{L}}\xspace}
\newcommand{\SL}{\ensuremath{\mathbb{SL}}\xspace}
\newcommand{\TL}{\ensuremath{\mathbb{TL}}\xspace}
\newcommand{\DTL}{\ensuremath{\mathbb{DTL}}\xspace}
\newcommand{\no}{\ensuremath{\varnothing}\xspace}
\newcommand{\cont}{\ensuremath{\mathbb{C}}\xspace}
\newcommand{\setminusC}{\text{\hspace{-0.8mm}}\setminus\text{\hspace{-0.8mm}}}
\allowdisplaybreaks
\usepackage{lipsum}
\newcounter{examplecounter}
\newenvironment{example}{\begin{quote}%
    \refstepcounter{examplecounter}%
  \textbf{Example \arabic{examplecounter}}%
  \quad
}{%
\end{quote}%
}
\newtheorem{lemma}[theorem]{Lemma}

%%% Begin ...
\begin{document}

%%% Start of article front matter
%\begin{frontmatter}

%\begin{fmbox}
%\dochead{Research}

%%%%%%%%%%%%%%%%%%%%%%%%%%%%%%%%%%%%%%%%%%%%%%
%%                                          %%
%% Enter the title of your article here     %%
%%                                          %%
%%%%%%%%%%%%%%%%%%%%%%%%%%%%%%%%%%%%%%%%%%%%%%

\title{Exact reconciliation of undated trees}

\author[1]{Leo van Iersel \thanks{l.j.j.v.iersel@gmail.com. Leo van Iersel was supported by a Veni grant of The Netherlands Organisation for Scientific Research (NWO)}}
\author[2]{Celine Scornavacca\thanks{celine.scornavacca@univ-montp2.fr}}
\author[3]{Steven Kelk\thanks{steven.kelk@maastrichtuniversity.nl}}
\affil[1]{Centrum Wiskunde \& Informatica (CWI), Amsterdam, The Netherlands}
\affil[2]{ISEM, CNRS -- Universit\'e Montpellier II, Montpellier, France}
\affil[3]{Department of Knowledge Engineering (DKE), Maastricht University, Maastricht, The Netherlands}

\maketitle

\begin{abstract} % abstract
%\parttitle{Background}
Reconciliation methods aim at recovering macro evolutionary events and at localizing them in the species history, by observing discrepancies between gene family trees and species trees.
%\parttitle{Results}
In this article we introduce an Integer Linear Programming (ILP) approach for the NP-hard problem of
computing a most parsimonious time-consistent reconciliation of a gene tree {with} a species tree when dating information on speciations is not available. The ILP formulation, which builds upon the \DTL model, returns a most parsimonious reconciliation ranging over all possible datings of the nodes of the species tree. By studying its performance on plausible simulated data we conclude that the ILP approach is significantly faster than a brute force search through the space of all possible species tree datings. 
%\parttitle{Conclusions}
Although the ILP formulation is currently limited to small trees, we believe that it is an important proof-of-concept which opens the door to the possibility of developing an exact, parsimony based approach to dating species trees.
The software (\ilpeace) is freely available from:  \url{ http://homepages.cwi.nl/~iersel/ilpeace/}
\end{abstract}

\textbf{Keywords}: reconciliation, integer linear programming, dating phylogenies. 

\section{Background}
Reconciliation methods aim at recovering macro evolutionary events -- such as speciations, losses and duplications of genes -- and at locating them in the species history, by comparing gene family trees to species trees (for a review, see~\cite{Briefing2011}). These methods are often used when studying genome evolution as well as for inferring orthology relationships \cite{Storm2002orthology,Heijden2007orthology} and for improving phylogeny inference and dating \cite{Abby2010TransferAndDatation,Akerborg2009PhylogenyAndRec,Boussau2013PhylogenyAndRec}.

Here, we consider the problem of finding the Most Parsimonious Reconciliation (MPR) when considering -- as possible macro-events that shape the genome -- speciations, duplications, transfers and losses of genes. The general problem of finding a MPR is known to be NP-complete, even for reconciling two binary trees \cite{TOFIGH2011}. The complexity of the problem is due to the difficulty of ensuring the \emph{time-consistency} of gene transfers, i.e. handling the chronological constraints among nodes of the species tree that are induced by transfer events. However, the problem becomes polynomially solvable when accepting a dated species tree as input \cite[among others]{CONOW2010,doyon2011efficient,TOFIGHTHESIS}. In this paper, we extend the combinatorial reconciliation model introduced by Doyon \emph{et al.}~\cite{doyon2011efficient} (called the ``\DTL'' model), which can be used to solve this special case of the problem {(i.e. when the species tree is dated)}. Although relative dates -- obtainable by relaxed molecular clock techniques --  are sufficient to ensure tractability, this information is not available for all portions of the Tree of Life. The question, therefore, is how to deal with the NP-hardness of the general (i.e. undated) version of the problem.

We note here that solving the undated version of the problem without ensuring the time-consistency of gene transfers can be done in polynomial time \cite{hallettlagergrentofigh2004}.
Moreover, algorithms to solve the undated version of the problem are fast and often find temporally feasible solutions \cite[among others]{TOFIGH2011,Bansal2012}. However, what should be done if the optimal solutions returned by such algorithms are \emph{not} temporally feasible and/or if optimal solutions under the temporally feasible model are strictly less parsimonious than their infeasible counterparts? In this case algorithms that do not enforce temporal feasibility convey only limited information {(especially when the algorithm does not indicate if a produced solution is temporally feasible or not)}. An alternative algorithmic approach is to heuristically search through the space of temporally feasible solution. This is the approach taken by (amongst others) \cite{CONOW2010}. However, such heuristics offer no guarantees that they will locate the most parsimonious solution. Without such guarantees the MPR model is weakened dramatically, because no indication {is given on} how far we are from the most parsimonious solution. Ideally, therefore, we require an algorithm that is \emph{guaranteed} to compute the most parsimonious temporally feasible solution.
 
In this paper, we propose such an algorithm. In particular, we {present} a flexible Integer Linear Programming (ILP) formulation for finding an MPR  when  {some or all of the} dates are  unknown.  {Essentially, the ILP formulation computes
{an} MPR  ranging over all possible datings of the species tree. Given that
the ILP is built upon the model of Doyon \emph{et al.}, the time-consistency of the reconciliation
computed by the ILP is guaranteed. Although ILP has been used earlier in the reconciliation literature \cite{chang2011ilp,than2009species}, this is the first attempt to tackle the \DTL reconciliation model using this 
technique.}

We have embedded the ILP formulation in the software package \ilpeace and made
this publicly available \cite{ilpeace}. The software computes the MPR between a given binary gene tree
and a given undated (or partially dated) binary species tree and outputs a visualisation of
the optimal reconciliation in the form of a phylogenetic network and a description of all reconciliation events in a format compatible with the reconciliation editor SylvX \cite{SylvX}. To validate the method (both in terms of correctness and running time) we have compared its performance, {on plausible simulated data}, to an {algorithm} that simply brute forces over all possible datings of the species tree. Such a comparison is reasonable because with a mathematical model as complex as the \DTL model it is far from obvious how one can move beyond brute force i.e. how one can intelligently prune the search space. Indeed, this is a strong motivation for our use of ILP in the first place. {Our experiments show that} \ilpeace is typically 10-100 times faster than the brute force approach and, although it is still limited to relatively small trees, we believe that \ilpeace is nevertheless an important proof-of-concept, mirroring the emergence of proof-of-concept ILP formulations elsewhere in phylogenetics, e.g. \cite{ilpsupertree}. Enhancements to the ILP formulation are likely to open the door to the exciting possibility of, in the future, using a set of gene trees to impose a ``most parsimonious dating'' upon an undated species tree, similar to recently undertaken work in the maximum likelihood framework~\cite{Szoi23102012}.

\section{Methods}

\subsection{Basics}

The node set, edge set, internal node set and  leaf node set of a tree $T$ are respectively denoted $V(T)$, $E(T)$, $I(T)$ and $L(T)$.
Moreover, the label of each leaf $u$ is  denoted by $\mathcal{L}(u)$, while 
the  set of labels of leaves of $T$ is denoted by $\mathcal{L}(T)$. The root node of $T$ is denoted by $\rho(T)$.
Given two nodes $u$ and $v$ of  {a rooted tree} $T$, we write $u \leq_{T} v$ if and only if $v$ is on the unique path from $u$ to the root of $T$. If $u \leq_{T} v$ and $u\neq v$ then we write $u <_{T} v$.
For an internal node $u$ of $T$  {with two children}, let~$u_l, u_r$  {denote the two children (in arbitrary order)}. In this paper, we  {assume} that gene and species trees are rooted, binary and uniquely leaf-labeled  {i.e. within
each tree there is a bijection between leaves and labels. Due to this bijectivity we will often refer to leaves and labels
interchangeably}. The height of a node $u$ in a tree $T$ is denoted by $h_T(u)$ while {the} height of~$T$ is denoted by $h(T)$. 

We define a gene tree $G$ as a tree where each leaf represents an extant gene. Similarly, a species tree $S$ is defined as a tree in which each leaf represents a distinct extant species.  Note that several leaves of a gene tree can be associated to the same species due to duplication and transfer events. 
 {Formally, we indicate this by} a surjective function $s :  {\mathcal{L}}(G) \rightarrow \mathcal{L}(S)$,  {called the \emph{species labeling} of $G$}, see Fig. \ref{Fig:trees} for an example. The set of species labels of the leaves of $G$ is denoted $\mathcal{S}(G)$.

\begin{figure}[!ht]
\centering 
 \includegraphics[scale=0.35]{./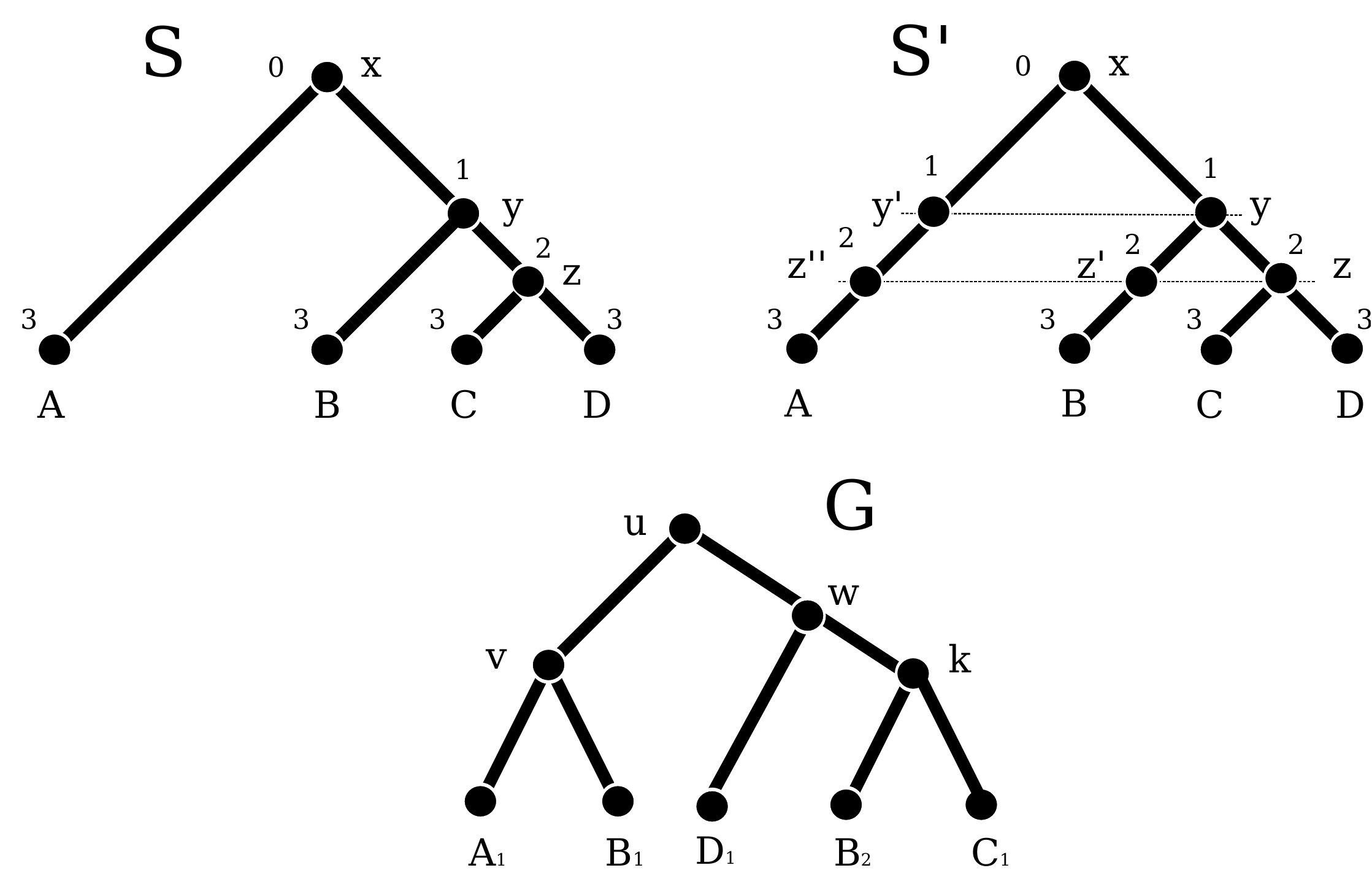}
\caption{ An example of a dated species tree ($S$) along with its  subdivision ($S'$), and of a gene tree ($G$).  The numbering of nodes of $S$ and $S'$ corresponds to dates. The species labeling of $G$ is the following: $s(A_1)=A$, $s(B_1)=s(B_2)=B$, $s(C_1)=C$ and $s(D_1)=D$.   \label{Fig:trees} }
\end{figure}

A species tree $S$ is said to be \emph{dated} if it is associated {with} a function  $\theta_{S}:V(S) \to \mathbb{R}^+$ such that {$\theta_S(\rho(S))=0$ {and} if $y {\geq_{S} } x$ then $\theta_S(y) \geq \theta_S(x)$% and if $x\in L(S)$ then $\theta_S(x)={||}$ 
\footnote{ {Note that in the original definition of dating \cite{doyon2011efficient}, dates are assumed to decrease towards the leaves. This is only a superficial difference.}}.
% The date of a node represents the time separating it from extant species. 
As mentioned in the introduction, dates  {make it possible} to solve the MPR problem in polynomial time. The recently  proposed \DTL model  \cite{doyon2011efficient} uses  a \emph{subdivision} of $S$ rather than $S$ itself when computing reconciliations  to ensure time-consistency of gene transfers in polynomial time.  Given a tree $S$ and a time function $\theta_{S}$, the  subdivision $S'$ of $S$  {(together with a new time function $\theta_{S'}$)} is constructed  as follows: firstly, for each node $x \in {I(S)}$ and each edge $(y_p,y) \in E(S)$ s.t. $\theta_S(y_p) < \theta_S(x) < \theta_S(y)$, an \emph{artificial} node $w$ (i.e. a node with only one child) is inserted along the edge $(y_p,y)$, with $\theta_{S'}(w) =\theta_S(x)$; secondly, for nodes $x\in V(S')$ corresponding to nodes already  present in $S$, we set $\theta_{S'}(x) =\theta_S(x)$. % I am not fully satisfied but it was strange to put ii) in a separate sentence
An example of a  {dated} species tree along with its subdivision  {is shown in} Fig.~\ref{Fig:trees}.

The \DTL model  \cite{doyon2011efficient} reconciles a dated binary species tree $S$ with a binary gene tree $G$ by building a mapping $\alpha$ that {maps} each {node} $u \in V (G)$ to an ordered list of nodes in $V(S')$, namely the ancestral and/or extant species in which the sequence {corresponding to}~$u$ evolved.  
This model takes into account four kinds of biological events: speciations, duplications, transfers and losses of genes.  
The atomic events of this model are: a speciation (\S), a duplication (\D), a transfer (\T), a transfer followed immediately by the loss of the non-transferred child (\TL), a speciation followed by the loss of one of the two resulting children (\SL), and a contemporary event (\cont) that associates an extant gene to its corresponding species. Finally, a no event ($\no$), {is used to model} the fact that a gene lineage has crossed a time boundary. {Note that duplication-loss events,  unlike transfer-loss and speciation-loss events, 
 leave no trace and are therefore undetectable. This is why, in the \DTL model, losses are never considered alone. }
The formal definition of a \DTL reconciliation \cite{doyon2011efficient} is given below: 

\begin{definition}[\cite{doyon2011efficient}]
  \label{Def:D}
  Consider a gene tree $G$, a dated species tree $S$ such that $\mathcal{S}{(G)} \subseteq \mathcal{L}{(S)}$,
  %
 % \marginpar{ $\mathcal{S}{(G)} \subseteq \mathcal{L}{(S)}$?}
  %
   and its subdivision $S'$.
  Let  $\alpha$ be a function that maps each node $u$ of $G$ onto
  an ordered sequence of nodes of $S'$, denoted
  $\alpha(u)=(\alpha_1(u), \alpha_2(u), \ldots, \alpha_\ell(u))$. 
  The function $\alpha$ is said to be a \emph{reconciliation} between $G$ and
  $S'$ if and only if exactly one of the 
  following events occurs for each couple of nodes $u$ of $G$
  and $\alpha_i(u)$ of $S'$ (denoting $\alpha_i(u)$ by $x'$ below):
  \begin{itemize}
  \item [a)] if $x'$ is the last node of $\alpha(u)$, one of the cases below is true:
    \begin{enumerate}
    \item[1.]  $u \in \mathcal{L}(G)$, $x' \in L(S')$ and  $s(\mathcal{L}(u))={\mathcal{L}}(x')$;  \hfill $($\cont event$)$
    \item[2.] $\{\alpha_1(u_l), \alpha_1(u_r)\} =\{x'_l,x'_r\}$;\hfill $($\S event$)$
    \item[3.] $\alpha_1(u_l)=x'$ and $\alpha_1(u_r)=x'$; \hfill $($\D event$)$
    \item[4.] $\alpha_1(u_l)=x'$, and $\alpha_1(u_r)$ is any node other than $x'$ having date $\theta_{S'}(x)$ 
    \item[or] $\alpha_1(u_r)=x'$, and $\alpha_1(u_l)$ is any node other than $x'$ having date $\theta_{S'}(x')$;
    \item[] 
    \hfill $($\T event$)$
   \end{enumerate}
    \newcounter{enum}
  \item  [b)] otherwise, one of the cases below is true:
   \begin{enumerate}
    \item[5.] $x'$ is an artificial node  and  $\alpha_{i+1}(u)$ is its only child;\hfill $($\no event$)$
    \item[6.] $x'$ is not artificial and $\alpha_{i+1}(u) \in
      \{x'_l,x'_r\}$; \hfill$($\SL event$)$ 
    \item[7.] $\alpha_{i+1}(u)$ is any node other than $x'$ having date $\theta_{S'}(x)$.
      %\item[] 
      \hfill   $($\TL event$)$
    \end{enumerate}
  \end{itemize}
\end{definition}

Note that, even if $\D$, $\T$ and $\TL$ events occur along a branch of $S'$, the mapping is done on nodes of $S'$. The cost of a reconciliation  $c(\alpha)$ is the sum $c_{\D}d +c _{\T}t + c_{\L}$l, where {$d$ is the number of~$\D$ events, $t$ is the number of~$\T$ and~$\TL$ events, and $l$ is the number of~$\SL$ and~$\TL$ events in $\alpha$, and $c_{\D}$, $c_{\T}$, $c_{\L}$ are, respectively, the costs of duplications, transfers and losses.}

\begin{example}\label{ex:recDef1}
An example of reconciliation is depicted in Fig.~\ref{Fig:rec}. This reconciliation corresponds to the following mapping $\alpha$: 
$\alpha(u)=\{x\}$ (event $\S_1$), 
$\alpha(v)=\{y'\}$ (event $\D_2$),
$\alpha(A_1)=\{y',z'',A\}$ (events $\no_3$, $\no_7$ plus {a} \cont event),
$\alpha(B_1)=\{y',z'',z',B\}$ (events $\no_4$, $\TL_6$ $\no_8$ plus {a} \cont event),
$\alpha(w)=\{y,z\}$ (events $\SL_5$, $\S_9$),
$\alpha(k)=\{C\}$ (event $\T_{10}$),
$\alpha(D_1)=\{D\}$ ({a} \cont event),
$\alpha(B_2)=\{B\}$ ({a} \cont event),
and $\alpha(C_1)=\{C\}$ ({a} \cont event).
\end{example}

\begin{figure}[!ht]
\centering
 \includegraphics[scale=0.5]{./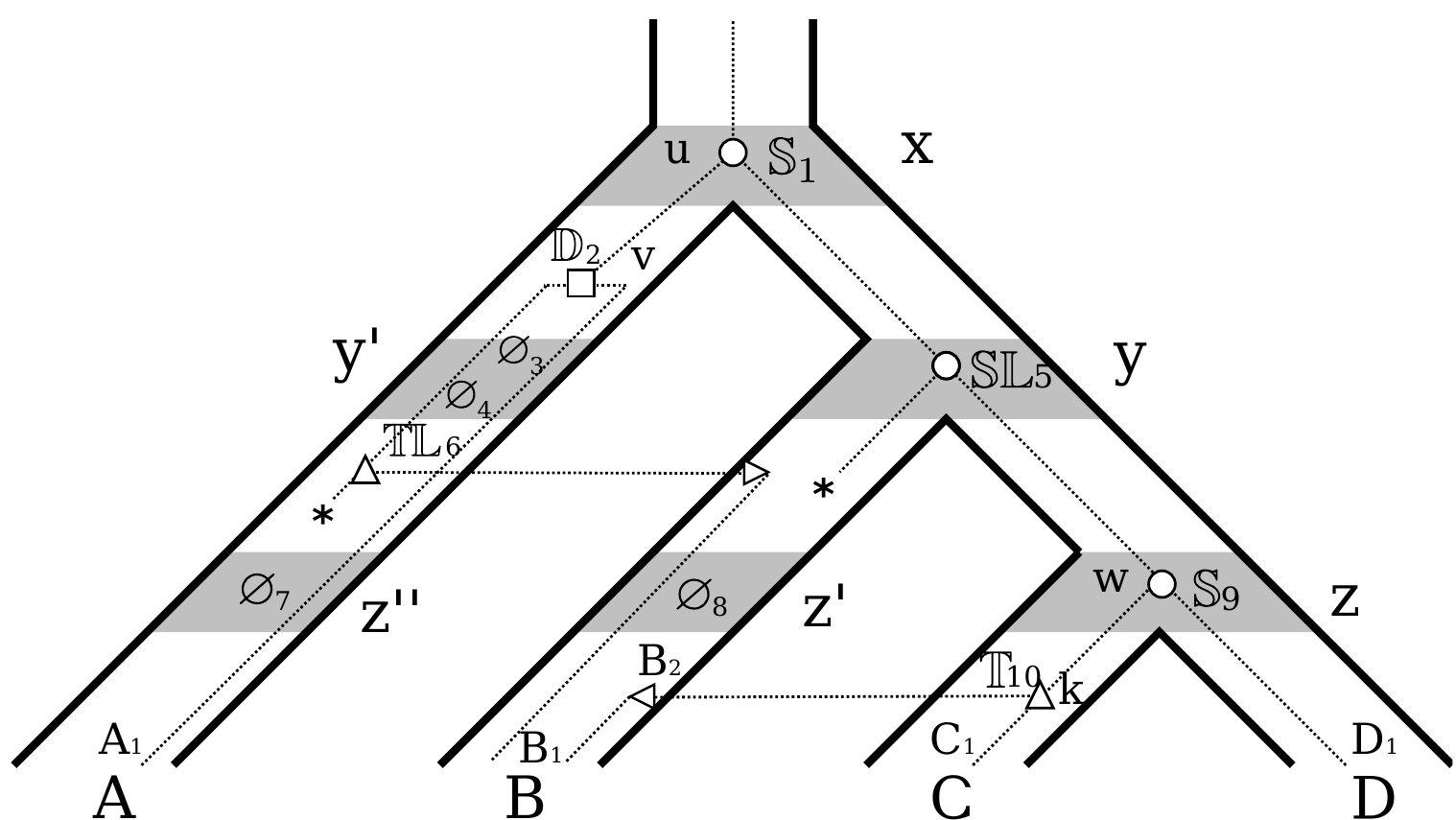}
 \caption{ An example of reconciliation for the trees depicted in Fig. \ref{Fig:trees}  containing two $\S$ events, one $\SL$ event, one \D event, a $\T$ event, a $\TL$ event, four \no events and five \cont events (\cont events are not indicated) \label{Fig:rec}.}
\end{figure}

\subsection{Dated Reconciliations}

Definition \ref{Def:D} assumes that the dates of all nodes are known. In particular, the subdivision~$S'$ of the species tree~$S$ in this definition is based on the known dates of the nodes. To deal with situations where the dates of the species tree are not known, we propose the following modified definition.  {We use $p(.)$ to refer to the parent of a node.}

\begin{definition}
  \label{Def:D2}
  Consider a gene tree $G$ and a species tree $S$ such that $\mathcal{S}{(G)} \subseteq \mathcal{L}{(S)}$.
    Let  $\alpha$ be a function that maps each node $u$ of $G$ onto
  an ordered sequence of nodes of $S$, denoted
  $\alpha(u)=(\alpha_1(u), \alpha_2(u), \ldots, \alpha_\ell(u))$. 
  Moreover, let ~$\tau$ be a function that assigns an ordered sequence of dates $\tau(u)=(\tau_1{(u)},\ldots ,\tau_{\ell}{(u)})\in\R^\ell$ to each node~${u}$ of~$G$, and let~$t$ be a function that assigns a date~$t(s)\in\R^+$ {to each node~$s$ of~$S$. The triple}~$(\alpha,\tau,t)$ is said to be a \emph{dated reconciliation} for~$(G,S)$  if and only if exactly one of the 
  following events occurs for each couple of nodes $u$ of $G$
  and $\alpha_i(u)$ of $S$ (denoting $\alpha_i(u)$ by $x$ below):

  \begin{itemize}
  \item [a)] if $x$ is the last node of $\alpha(u)$, one of the cases below is true:
    \begin{enumerate}
    \item[1.]  ${u} \in  {L}(G)$, $x \in L(S)$,
$\text{s}(\mathcal{L}(u))=\text{{$\mathcal{L}$}}(x)$ and
$\tau_\ell{(u)} = t(x)$;  \hfill $($\cont event$)$
    \item[2.] $\{\alpha_1(u_l), \alpha_1(u_r)\} =\{x_l,x_r\}$ and
$\tau_\ell{(u)}=t(x)$;\hfill $($\S event$)$
    \item[3.] $\alpha_1(u_l)=x$, $\alpha_1(u_r)=x$ and
$t(p(x)) {< } \tau_\ell{(u)} {<} t(x)$; \hfill $($\D event$)$
    \item[4.] $\alpha_1(u_l)=x$,   and $y:=\alpha_1(u_r)$ is any node other than $x$ such that  
$t(p(x)) {<} \tau_\ell{(u)} {<} t(x)$  and $t(p(y)) {<} \tau_\ell{(u)} {<} t(y)$,
    or the same with $l$ and $r$ interchanged;
    %\item[] 
    \hfill $($\T event$)$
   \end{enumerate}
  \item  [b)] otherwise, one of the cases below is true:
   \begin{enumerate}
    \item[5.] $\alpha_{i+1}(u) \in
    \{x_l,x_r\}$ and $\tau_i{(u)} = t(x)$;
\hfill$($\SL event$)$ 
\item[6.] $\alpha_{i+1}(u)$ is any node other than $x$ such that $t(p(x)) {<} \tau_i{(u)} < t(x)$ and
\item[] $t(p(\alpha_{i+1}(u))) {<} \tau_i{(u)} {<} t(\alpha_{i+1}(u))$.
      %\item[] 
      \hfill   $($\TL event$)$
    \end{enumerate}
\end{itemize}
In addition, 
    \begin{enumerate}
\item [7.] $\tau_\ell(p(u)) < \tau_1{(u)}$ and, for each~$i\in\{1,\ldots ,|\alpha(u)|-1\}$, $ \tau_i{(u)} < \tau_{i+1}{(u)}$;
\item[8.] $t(p(s)) < t(s)$, for each node~$s$ of~$S$;
\item[9.] $t(\rho(S))=0$ and~$t(s) = |I(S)|$ for each $s\in L(S)$.

\end{enumerate}
\end{definition}

{When the species tree is not required to be ultrametric (which we do require in this paper), restriction~9 can be omitted from the above definition.}

The cost of a dated reconciliation $c(\alpha,\tau,t)$ is defined similarly to what is done for a reconciliation.

%Note that the value of~$\ell$ is allowed to be different for each node~$g$ of~$G$. 

The main differences between Definition~\ref{Def:D} and~\ref{Def:D2} are that in the latter definition we work with the original species tree~$S$ and not with its subdivision, and that a date~$t(s)$ is assigned to each node~$s$ of~$S$ and a date~$\tau_i({u})$ to each event. For~$\cont,\S$ and~$\SL$ events, the date of the event is equal to the date of the corresponding node of~$S$. For~$\D$ events, the date of the event has to be between the date of the corresponding node of~$S$ and the date of its parent. For~$\T$ and~$\TL$ events, the date of the event has to be in-between the date of the donor node of the transfer and the date of its parent, and the date of the event has to be in-between the date of the recipient node of the transfer and the date of its parent. This, {along with constraint~7,} forces the dated reconciliation to be time-consistent.

%We now argue that the brute force approach from Section~\ref{subsec:val} can indeed be used to find an optimal dated reconciliation. The full proof has been omitted and we give a short sketch. The brute force approach loops through all total orderings that extend the partial ordering given by the species tree. Indeed, the date function~$t$ in Definition~\ref{Def:D2} extends the partial ordering given by the species tree due to the last restriction. It remains to observe that this date function can be chosen in such a way that $t(s)\neq t(s')$ whenever~$s\neq s'$. This can be seen as follows. Whenever~$t(s)=t(s')$, the value of~$t(s')$ can be modified to~$t(s')-\delta$ for some small~$\delta>0$. If~$\delta$ is chosen small enough, this can be shown not to affect the reconciliation. This leads to a date function~$t$ for which $t(s)\neq t(s')$ whenever~$s\neq s'$. Such a date function can easily be turned into a total ordering.

{Since} this definition does not use a subdivision of the species tree but works with the original species tree, {it} does not use $\no$-events. The core of the idea is that the reconciliation itself indicates the dates of nodes of the species tree and also the dates on which events happen.
 
 The following lemmas ensure that dated reconciliations can be used to find a minimum cost reconciliation over all possible datings of the species tree.
\begin{lemma}
Given a reconciliation $\alpha$ for a gene tree $G$ and a dated species tree $(S,\theta_S)$, there exists a dated reconciliation $(\alpha',\tau,t)$ for $G$ and $S$ with cost $c(\alpha)$.
\end{lemma}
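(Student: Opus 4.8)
The plan is to turn the subdivision-based reconciliation $\alpha$ into a dated reconciliation by erasing the book-keeping that the subdivision $S'$ introduces and by reading off a dating of $S$ directly from $\theta_S$. Recall that $S'$ is obtained from $S$ only by inserting \emph{artificial} (one-child) nodes along its edges, one per edge for each internal-node date that the edge crosses. I would define the projection $\pi:V(S')\to V(S)$ that fixes every non-artificial node and sends an artificial node lying on the subdivided edge $(y_p,y)$ of $S$ to its lower endpoint $y$, and then set $\alpha'(u)$ to be the sequence obtained from $(\pi(\alpha_1(u)),\dots,\pi(\alpha_\ell(u)))$ by collapsing maximal runs of equal consecutive entries into a single entry. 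Intuitively this deletes exactly the \no-events. For the dating, assuming (as the subdivision construction and constraint~8 already do) that $\theta_S$ is strictly increasing along edges, let $0=d_0<d_1<\dots<d_{m-1}$ be the distinct internal-node dates, put $t(x)=k$ for the internal node $x$ with $\theta_S(x)=d_k$, and $t(s)=|I(S)|$ for every leaf $s$.

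The first thing to check is that $\alpha'$ has the event structure of Definition~\ref{Def:D2} and the same cost. A \no-event steps from an artificial node to its unique child on the \emph{same} edge of $S$, so it leaves $\pi$ unchanged and is absorbed by the collapse; conversely every other event strictly changes the underlying real node or edge, so the collapse deletes precisely the \no-events. A routine case check then shows that $\pi$ carries a final \cont/\S/\D/\T event of $\alpha$ to the corresponding final event of $\alpha'$ (using $L(S')=L(S)$, and that applying $\pi$ to the images of $u_l,u_r$ yields $\{x_l,x_r\}$, a repeated node, or two distinct edges, respectively), and an internal \SL/\TL event to an \SL/\TL event. Since \no-events carry no cost and all remaining events are preserved one-to-one, $c(\alpha',\tau,t)=c(\alpha)$.

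Next I would fix the remaining dates. Constraints~8 and~9 hold for $t$ by construction. The key observation is that the rank dating preserves ``edge crosses level'': an edge $(y_p,y)$ of $S$ satisfies $\theta_S(y_p)<d_j<\theta_S(y)$ if and only if $t(y_p)<j<t(y)$. Hence, setting the date of a \cont, \S{} or \SL{} event at a real node $x$ to $\tau=t(x)$ (forced by cases~1, 2, 5) and the date of a \D, \T{} or \TL{} event occurring at $S'$-level $d_j$ to $\tau=j$, every open-interval requirement in cases~3, 4, 6 is met, because the donor and recipient edges of a transfer, and the edge carrying a duplication, all cross level $d_j$.

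The main obstacle is constraint~7, the \emph{strict} increase of the dates $\tau_1(u)<\dots<\tau_{\ell}(u)$ along each gene lineage together with $\tau_\ell(p(u))<\tau_1(u)$ across gene edges. Along a lineage the $S'$-levels are only \emph{non-decreasing}: they repeat exactly at a maximal chain of successive \TL-events occurring at one instant, and at a \D{} or \T{} event immediately followed by a same-level transfer in a child lineage. To repair this I would perturb within each such same-level block, replacing the common value $j$ by finitely many strictly increasing reals $j<j+\delta<j+2\delta<\dots$ chosen small enough to remain inside the intersection of the relevant open intervals; this is always possible because each interval $(t(y_p),t(y))$ contains $j$ in its interior and hence contains $(j-1,j+1)$, leaving room for the finitely many values needed. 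Equivalently, since the forced integer dates of the node-events are already strictly increasing along each lineage, assigning the interleaved branch-events is a finite feasibility problem over overlapping open intervals, which is clearly solvable. Making this perturbation globally consistent with the parent-to-child links at \S, \D{} and \T{} events---by processing the gene tree from the root downwards---is the only genuinely fiddly part; the event structure and the equality of costs are immediate once the projection $\pi$ is in place.
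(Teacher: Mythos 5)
Your construction is essentially the paper's own (sketched) proof: delete the \no-events, project what remains onto $V(S)$, and read the event dates off the dated subdivision $(S',\theta_{S'})$ while dating $S$ itself (the paper sets $t=\theta_S$ directly where you use its rank function to meet restriction~9). You are in fact more careful than the paper's two-line sketch, in particular in making the projection of artificial nodes explicit and in observing that the strict inequalities of Definition~\ref{Def:D2} (restriction~7 and the open-interval conditions for \D, \T{} and \TL{} events) force a small perturbation of the inherited dates rather than following immediately.
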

\begin{proof}[Sketch of the proof.]
Let $(S',\theta_{S'})$ be  the dated subdivision of the dated species tree $(S,\theta_S)$. The mapping $\alpha'(u)$ for each node $u \in V(G)$ can be obtained from $\alpha(u)$ by removing all the \no events; then the dates of the events can be obtained by setting $ \tau_i{(u)}$ to $ \theta_{S'}(\alpha'_i(u))$ for each node $u \in V(G)$ and $1\leq i\leq |\alpha'(u)|$. To obtain $t$, we set $t(x)=\theta_S(x)$ for each node $x \in V(S)$. It is straightforward to see that $(\alpha',\tau,t)$ satisfies Definition \ref{Def:D2} and that it has the same number of {events of each type as}~$\alpha$.
\end{proof}

 \begin{lemma}
Given a dated reconciliation $(\alpha',\tau,t)$ for a gene tree $G$ and a species tree $S$, there exists a reconciliation $\alpha$  for $G$ and $S,t$ with cost $c(\alpha',\tau,t)$.
\end{lemma}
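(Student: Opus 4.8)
The plan is to reverse the construction used in the previous lemma. Given the dated reconciliation $(\alpha',\tau,t)$, I would first observe that $t$ is itself a legitimate dating of $S$ in the sense required to build a subdivision: constraint~8 gives $t(p(s)) < t(s)$ and constraint~9 gives $t(\rho(S))=0$, so $t$ may play the role of $\theta_S$. I would therefore form the subdivision $S'$ of $(S,t)$ together with its induced time function $\theta_{S'}$ exactly as in the definition of a subdivision, and then construct a \DTL mapping $\alpha$ (in the sense of Definition~\ref{Def:D}) whose restriction to the non-$\no$ events mirrors $\alpha'$. The guiding idea is that each pair $(\alpha'_i(u),\tau_i(u))$ --- a node of $S$ together with the date at which its event occurs --- should be sent to the node of $S'$ lying on the relevant edge of $S$ at the appropriate time slice, and that every time slice crossed between two consecutive events must be accounted for by reinserting a $\no$ event, since Definition~\ref{Def:D2} deliberately suppresses exactly these time-boundary crossings.

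Concretely, for each node $u$ of $G$ I would traverse the sequence $\alpha'(u)$ downwards and, between the node realising event $i$ and the node realising event $i+1$, insert one $\no$ event for every artificial node of $S'$ encountered along the path. For $\cont$, $\S$, $\D$ and $\SL$ events the verification should be routine: in each of these cases Definition~\ref{Def:D2} either fixes $\tau$ to $t(x)$ (cases~1, 2 and~5) or forces it strictly inside the edge of $S$ (case~3), which is precisely what is needed to locate the corresponding node of $S'$ and to match, respectively, cases~1, 2, 3 and~6 of Definition~\ref{Def:D}.

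The hard part will be the two transfer events ($\T$ and $\TL$, cases~4 and~6 of Definition~\ref{Def:D2}), where the gap between the two formalisms is genuine: a dated reconciliation permits a transfer at an \emph{arbitrary} real date $\tau$ lying in the open intervals of both the donor edge $(p(x),x)$ and the recipient edge $(p(y),y)$, whereas the \DTL model only allows a transfer between two nodes of $S'$ that share the same date, i.e.\ that lie on a common time slice --- and the overlap $(\max(t(p(x)),t(p(y))),\min(t(x),t(y)))$ of the two date intervals need not contain a time slice in its interior. My approach would be, for each transfer, to select an appropriate time slice $d$ in (the closure of) this overlap, place the donor copy on the node that the donor edge carries at date $d$ and the recipient copy on the contemporaneous node of the recipient edge, and route the remaining descent to $\alpha'_{i+1}(u)$ (or to the two children, in the branching case~4) by $\no$ events. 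The delicate points to check are that such a slice exists, that it can be chosen so as not to violate the increasing-date constraint~7 for the lineage, that the node picked on the donor edge is not its bottom endpoint when the event is a $\TL$ (so that case~7 of Definition~\ref{Def:D}, an intermediate $\TL$, is triggered rather than a terminal event), and that the recipient edge genuinely carries a node at date $d$. All of these should follow from the fact that the subdivision places a node on every edge alive at each internal-node date of $S$, combined with the strict inequalities of cases~4 and~6; but organising this case analysis so that no ordering is broken is where the real work lies.

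Finally I would argue cost preservation. The construction adds only $\no$ events to those implicit in $\alpha'$ and preserves the number of $\D$, $\T$, $\TL$ and $\SL$ events one for one. Since $\no$ events contribute nothing to the cost and the cost of a reconciliation depends only on the counts of $\D$, $\T$, $\TL$ and $\SL$ events, we obtain $c(\alpha)=c(\alpha',\tau,t)$, which completes the proof and, together with the previous lemma, establishes the desired equivalence between reconciliations ranging over all datings of $S$ and dated reconciliations.
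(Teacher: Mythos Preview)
Your approach is essentially the same as the paper's: observe that constraints~8 and~9 make $t$ a valid dating of $S$, build the subdivision $(S',\theta_{S'})$, and then obtain $\alpha$ from $\alpha'$ by inserting the missing $\no$ events between consecutive entries of each $\alpha'(u)$ (and before $\alpha'_1(u)$), so that every time slice crossed is accounted for. The paper's argument is explicitly only a sketch and simply states which artificial nodes of $S'$ to splice in, without further justification.

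Your proposal is in fact more careful than the paper on the one genuinely delicate point, namely how a $\T$ or $\TL$ event at an arbitrary real date $\tau_i(u)$ is realised in $S'$, where transfers are only permitted between contemporaneous nodes at one of finitely many time slices. The paper's sketch does not address this at all; you correctly isolate it as the place ``where the real work lies''. One caution: snapping to the \emph{closure} of the overlap interval can land you on a non-artificial node (e.g.\ $p(y)$), and routing the lineage onward from there would trigger an $\SL$ rather than a $\no$, inflating the cost. You should instead argue that the open overlap always contains a node of $S'$ on each edge (using that $\max(t(p(x)),t(p(y)))$ and $\min(t(x),t(y))$ are themselves dates of internal nodes of $S$, hence induce artificial nodes on the opposite edge), or equivalently snap to the bottom endpoint of the donor edge when no interior slice is available and check that this does not create a spurious $\SL$. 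With that refinement your plan goes through and the cost argument is exactly as you state.
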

\begin{proof}[Sketch of the proof.]
Constraints 8 and 9 of Definition \ref{Def:D2} ensure that the pair $(S,t)$ is a dated tree. Let $(S',\theta_{S'})$ be  the dated subdivision of the dated species tree. The mapping $\alpha(u)$ for each node $u \in V(G)$ can be obtained from $\alpha'(u)$ in the following way: we add before $\alpha'_1(u)$ all the descendants $x'$ of $p(\alpha'_1(u))$ in $S'$ having $t_{|\alpha'(u_p)|}(u_p) < \theta_{S'}(x') < t_{1}(u) $.
Moreover, for each $1\leq i < |\alpha'_i(u)|$, we add between $\alpha'_i(u)$ and $\alpha'_{i+1}(u)$ all the descendants $x'$ of $p(\alpha'_i(u))$ in $S'$ having $t_{i}(u) < \theta_{S'}(x') < t_{i+1}(u) $.

%Moreover, if $\alpha'_i(u)$ is a \TL event, we also add all descendants $y'$ of $p(\alpha'_{i+1}(u))$ in $S'$ having $t_i(u) < \theta_{S'}(y') < t_{i+1}(u) $ . 
  It can be easily proved that $\alpha$ satisfies Definition \ref{Def:D} and that it has the same number of {events of each type as}~$(\alpha',\tau,t)$.
\end{proof}

 \begin{example}\label{ex:recDef2}
 
 Consider the reconciliation from Fig.~\ref{Fig:rec}. Since we do not use a subdivision, the reconciliation is given by the following mapping:

\smallskip
\begin{tabular}{ll}
$\alpha(u)=\{x\}$ ($\S$ event) & $\alpha(v)=\{A\}$ ($\D$ event)\\
$\alpha(A_1)=\{A\}$ (\cont event) & $\alpha(B_1)=\{A,B\}$ ($\TL$ and \cont event)\\
$\alpha(w)=\{y,z\}$ ($\SL$ and $\S$ event) & $\alpha(k)=\{C\}$ ($\T$ event)\\
$\alpha(D_1)=\{D\}$, $\alpha(B_2)=\{B\}$, &\hspace{-8mm} and  $\alpha(C_1)=\{C\}$ (all \cont events)\\
\end{tabular}

\smallskip
 {One possible assignment for $\tau$ and $t$ is:}
 
 \smallskip
\begin{tabular}{llll}$t(x)=0$  & $t(y)=1$ & $t(z)=2$ &  \hspace{-28mm} $t(A)=(B)=t(C)=(D)=3$\\
$\tau(u)=\{0\}$ & $\tau(v)=\{0.5\}$ & $\tau(A_1)=\{3\}$ &  \hspace{-28mm} $\tau(B_1)=\{1.5,3\}$ \\
$\tau(w)=\{1,2\}$  & $\tau(k)=\{2.5\}$ & $\tau(D_1)=\tau(B_2)=\tau(C_1)=\{3\}$. & 
\end{tabular}

\end{example}

Based on this new definition we propose an ILP formulation for finding an optimal dated reconciliation. An ILP consists of three main building blocks: the variables {and parameters} (described in Section~\ref{sec:var}), the objective function and the constraints (both described in Section~\ref{sec:ilp}). Although we omit the proof, the ILP formulation is mathematically equivalent to computing {a} most parsimonious {dated} reconciliation. {(More details on why the two formulations are equivalent will be given in the next section)}. We call this new method \ilpeace.

\subsection{Variables {and parameters}}\label{sec:var}

 Let $\mathcal{I}:=\{1,\ldots ,\ell\}$,  with~$\ell=|I(S)|$;  note that this is
a safe upper bound on the number of nodes of the species tree to which a node of the gene tree can be mapped in a most parsimonious reconciliation {(because otherwise one could modify the reconciliation to obtain one with fewer~$\SL$ and/or fewer~$\TL$ events).}
Let $\E:=\{\cont,\S,\D,\T,\SL,\TL\}$. Then we have a binary variable~$\alpha_{u,x,i,e}$, for each~$u\in V(G),x\in V(S), i \in \mathcal{I},e\in \E$, and $\alpha_{u,x,i,e}=1$ precisely if node~$x$ is the $i$-th node of~$V(S)$ that node~$u$ is mapped to and this mapping corresponds to an event of type~$e$. In addition, there is a variable~$\tau_{u,i}\in\R^+$, for each~$u\in V(G)$ and~$ i \in \mathcal{I}$, representing the time of the $i$-th event mapping node~$u$ to some node {in} $V(S)$ (with $\tau_{u,i}=\tau_{u,i-1}$ if there is no such event). There is also a variable~$t_s\in\{0,1,\ldots ,T_{\max}\}$, for each~$x\in V(S)$, representing the time of (the speciation event indicated by) node~$x$, with~$T_{\max} := |I(S)|$. By assumption,~$t_\rho(S)=0$ and~$t_x=T_{\max}$ for each leaf~$x$ of the species tree.

{The only parameters are the costs of the events. For} each event~$e\in \E$, $c_e$ indicates {its associated cost}.

 \begin{example}\label{ex:ILP}
 
Consider the dated reconciliation given in Example \ref{ex:recDef2}. 
Writing {the mapping described by} this reconciliation as binary variables, we get:

\smallskip
\begin{tabular}{lll}
$\alpha_{u,x,1,\S} = 1$ & $\alpha_{v, {A},1,\D} = 1$ & $\alpha_{A_1,A,1,\cont} = 1$\\
$\alpha_{B_1,A,1,\TL} = 1$ & $\alpha_{B_1,B,2,\cont} = 1$ &\\
$\alpha_{w,y,1,\SL} = 1$ & $\alpha_{w,z,2,\S} = 1$ & $\alpha_{k,C,1,\T} = 1$\\
$\alpha_{D_1,D,1,\cont} = 1$ & $\alpha_{B_2,B,1,\cont} = 1$ & $\alpha_{C_1,C,1,\cont} = 1$\\
\end{tabular}

\smallskip
and all other $\alpha$-variables are~0. The timing variables are:

\smallskip
\begin{tabular}{llll}
$t_x=0$ & $t_y=1$ & $t_z=2$&\\
$t_A=3$ & $t_B=3$ & $t_C=3$ & $t_D=3$\\
$\tau_{u,1\leq i\leq3}=0$ & $\tau_{v,1\leq i\leq3}=0.5$ & &\\
$\tau_{A_1,1\leq i\leq3}=3$ & $\tau_{B_1,1}=1.5$ & $\tau_{B_1,2\leq i\leq3}=3$&\\
$\tau_{w,1}=1$ & $\tau_{w,2\leq i\leq3}=2$ & $\tau_{k,1\leq i\leq3}=2.5$&\\
$\tau_{D_1,1\leq i\leq3}=3$ & $\tau_{B_2,1\leq i\leq3}=3$ & $\tau_{C_1,1\leq i\leq3}=3.$&\\
\end{tabular}
 
\end{example}

\medskip

{We argue that any dated reconciliation can be written in terms of such variables. To see this, let~$(\alpha,\tau,t)$ be a dated reconciliation. Suppose that~$\alpha_i(u)=x$. Then, by Definition~\ref{Def:D2}, there is exactly one event~$e\in\E$ corresponding to~$u,x$ and~$i$. In that case, we set the binary variable~$\alpha_{u,x,i,e}$ to~1 and we set~$\tau_{u,i}:=\tau_i(u)$.  Now suppose that $\alpha_i(u)$ does not exist, i.e.~$i >|\alpha(u)|$. In that case, we set $\alpha_{u,x,i,e}$ to~0 and $\tau_{u,i}$ to~$\tau_{u,i-1}$. In all other cases, i.e. if $\alpha_i(u)\neq x$, we also set $\alpha_{u,x,i,e}$ to~0. Finally, the dating variables for the species tree are simply obtained by setting~$t_x = t(x)$ for each vertex~$x$ of~$S$.}

{The constraints in the next subsection will enforce that the values of the binary variables produced by an ILP solver correspond to a dated reconciliation.}

\subsection{ILP Formulation}\label{sec:ilp}

The objective is to minimize the total cost of all events used in the reconciliation. Hence, the objective function is:

\[
\min \quad  \sum_{u\in V(G)}\sum_{x\in V(S)}\sum_{i \in \mathcal{I}}\sum_{e\in \E} c_e \alpha_{u,x,i,e}.
\]

Now we introduce the constraints needed to have our mapping $\alpha$ satisfy Definition~\ref{Def:D2}. Constraints~[1]-[7] model some general properties of the mapping. Constraint~[1] enforces that each vertex~$u\in V(G)$ is mapped to at most one vertex~$x\in V(S)$ in at most one type of event~$e\in\E$, for each index~$i\in\cI$. Constraint~[2] ensures that each~$u$ is mapped somewhere (i.e. to some~$x\in V(S)$ for some~$e\in\E$) for index~$i=1$. Constraint~[3] makes sure that if~$u$ is mapped somewhere for index~$i+1$, then~$u$ is also mapped somewhere for index~$i$. Constraint~[4] states that if~$u$ is mapped somewhere for index~$i+1$, then for index~$i$ it cannot be mapped to an event of type~$\cont,\S,\D$ or~$\T$. Constraint~[5] enforces that if~$u$ is mapped somewhere for index~$i$ with an event of type~$\SL$ or~$\TL$, then it has to be mapped somewhere for index~$i+1$. Finally, Constraints~[6] and~[7] state that no events of type~$\SL$ and~$\TL$ are allowed for the last index~$i=\ell$.

\begin{flalign*}
[1]\quad& \sum_{x\in V(S)}\sum_{e\in \E} \alpha_{u,x,i,e} \leq 1						&&\forall\text{ } u\in  {V(G)}, i \in \mathcal{I}\\
[2]\quad&\sum_{x\in V(S)}\sum_{e\in \E} \alpha_{u,x,1,e} = 1						&&\forall\text{ } u\in  V(G)\\
[3]\quad&\sum_{x\in V(S)}\sum_{e\in \E} (\alpha_{u,x,i,e} - \alpha_{u,x,i+1,e}) \geq 0		&&\forall\text{ } u\in  V(G), i\in \mathcal{I}\setminusC\{\ell\}\\
[4]\quad&\sum_{x\in V(S)}\sum_{e\in \E} \alpha_{u,x,i+1,e} + \sum_{x\in V(S)}\sum_{e\in \{\cont,\S,\D,\T\}} \alpha_{u,x,i,e} \leq 1\hspace{-1.3mm}		&&\forall\text{} u\in  V(G), i\in \mathcal{I}\setminusC\{\ell\}\\
[5]\quad& \sum_{x\in V(S)}\sum_{e\in \{\SL,\TL\}} \alpha_{u,x,i,e} - \sum_{x\in V(S)}\sum_{e\in \E} \alpha_{u,x,i+1,e} \leq 0		&&\forall\text{ } u\hspace{-0.8mm}\in \hspace{-0.8mm} V(G), i\in \mathcal{I}\setminusC\{\ell\}\\
[6]\quad&\text{ } \alpha_{u,x,\ell,\SL} = 0											&&\forall\text{ } u\in  V(G),x \in  V(S)\\
[7]\quad&\text{ } \alpha_{u,x,\ell,\TL} = 0											&&\forall\text{ } u\in  V(G), x \in  V(S)
\end{flalign*}

The next set of constraints models the \cont events. %Each leaf~$x$ has a label~$l(x)$. 
A \cont event maps a leaf of~$G$ to the leaf of $S$ given by the $s(.)$ function. 

%a leaf of~$V(S)$ with the same species label.\\\\
%
{
\begin{flalign*}
[8]\quad& \displaystyle\sum_{i \in \mathcal{I}} \alpha_{u,x,i,\cont} = 1									&& \forall\text{ } u\in  L(G),x \in  L(S) \text{ with } s(\mathcal{L}(u))={\mathcal{L}}(x)\\
[9]\quad& \displaystyle\sum_{i \in \mathcal{I}} \alpha_{u,x,i,\cont} = 0									&&  \text{otherwise}
\end{flalign*}
}

We now model the \S events. If~$u$ is mapped to~$x$ in a \S event, then the children of~$u$ must be mapped to the children of~$x$. This can be enforced by the following constraints. Constraint~[10] states that leaves can not be mapped in \S events. Constraint~[11] enforces that if~$u$ is mapped somewhere in an event of type~$\S$, then its ``left'' child~$u_l$ must be mapped to one of the children~$x_l,x_r$ of~$x$ for index~$i=1$. Constraint~[12] does the same for the right child~$u_r$ of~$u$. Constraint~[13] then enforces that~$u_l$ and~$u_r$ cannot both be mapped to~$x_l$ and Constraint~[14] enforces that~$u_l$ and~$u_r$ cannot both be mapped to~$x_r$.

{
\begin{flalign*}
[10]\quad& \alpha_{u,x,i,\S} = 0										& \hspace{-1cm} \forall  i \in \mathcal{I},  \text{if } u&\in  L(G) \text{ or }x \in  L(S) \\
[11]\quad&   \sum_{i \in \mathcal{I}} \alpha_{u,x,i,\S} - \sum_{e\in \E} (\alpha_{{u}_l,x_l,1,e} + \alpha_{{u}_l,x_r,1,e}) \leq 0 					& 	& \forall\text{ } u\in  I(G),x \in  I(S)\\
[12]\quad&   \sum_{i \in \mathcal{I}} \alpha_{u,x,i,\S} - \sum_{e\in \E} (\alpha_{{u}_r,x_l,1,e} + \alpha_{{u}_r,x_r,1,e}) \leq 0						& & \forall\text{ } u\in  I(G),x \in  I(S)\\
[13]\quad&   \sum_{i \in \mathcal{I}} \alpha_{u,x,i,\S} + \sum_{e\in \E} (\alpha_{{u}_l,x_l,1,e} + \alpha_{{u}_r,x_l,1,e}) \leq 2						& &\forall\text{ } u\in  I(G),x \in  I(S)\\
[14]\quad&   \sum_{i \in \mathcal{I}} \alpha_{u,x,i,\S} + \sum_{e\in \E} (\alpha_{{u}_l,x_r,1,e} + \alpha_{{u}_r,x_r,1,e}) \leq 2						& &  \forall\text{ } u\in  I(G),x \in  I(S)
\end{flalign*}
}

This brings us to modeling  \D events. If~$u$ is mapped to~${x}$ in a \D event, then the children of~$u$ must also be mapped to~${x}$. This is modelled by the following two constraints. Constraint~[15] makes sure that leaves are not mapped in events of type~$\D$. Constraint~[16] enforces that if~$u$ is mapped to~$x$ in an event of type~$\D$, then both children~$u_l$ and~$u_r$ of~$u$ have to be mapped to~$x$.
 
 {
\begin{flalign*}
[15]\quad& \alpha_{u,x,i,\D} = 0												&&\forall\text{ } u\in  L(G), x \in  V(S), i \in \mathcal{I}\\
[16]\quad& \sum_{i \in \mathcal{I}} 2 \alpha_{u,x,i,\D} - \sum_{e\in \E} (\alpha_{{u}_l,x,1,e} + \alpha_{{u}_r,x,1,e}) \leq 0						&&\forall\text{ } u\in  I(G),x \in  V(S)
\end{flalign*}
}

Constraints [17]-[28] model general properties of the timing variables. First we need some definitions.  For a node~$v$ that is not the root, let~$p(v)$ denote its parent. Let $l_{p(v),v}$ denote the length of the edge between~$p(v)$ and~$v$ {(representing the elapsed time {between the two speciations})}, if known, and let $l_{p(v),v}=0$ indicate that the length of this edge is not known.  {Let~$\epsilon := 1 / \ell$.}

The following five constraints model the timing variables of the vertices~$V(S)$ of the species tree. Constraint~[17] states that the date of the root is~0. We assume that all leaves of~$S$ have date~$T_{\max}:=|I(S)|$ (Constraint~[18]) while the internal vertices of~$S$ have dates in~$\{0,1,\ldots ,T_{\max}-1\}$ (Constraint~[19]). (Note that these constraints could easily be relaxed for non-ultrametric species trees.) Constraint~[20] makes sure that the difference between the date of~$x$ and the date of its parent~$p(x)$ is~$l_{p(x),x}$ if this edge length is known. Constraint~[21] states that this difference should be at least~1 when the edge length is not known. This constraint is valid because without loss of generality the dates of internal vertices of~$S$ are all different elements of $\{0,1,\ldots ,T_{\max}-1\}$.

\begin{flalign*}
[17]\quad& t_{\rho(S)} = 0												&&\\
[18]\quad& t_{x} = T_{\max}												&&\forall\text{ } x\in L(S)\\
[19]\quad& t_{x} \leq T_{\max} - 1												&&\forall\text{ }x\in  I(S)\\
[20]\quad& t_x - t_{p(x)} = l_{p(x),x}												&&\forall\text{ }x\in  V(S)\setminusC\{\rho(S)\} \text{ with } l_{p(x),x} \neq 0\\
[21]\quad& t_x - t_{p(x)} \geq 1												&&\forall\text{ }x   \in  V(S)\setminusC\{\rho(S)\} \text{ with } l_{p(x),x} = 0
\end{flalign*}

The next three constraints model the timing variables for the mapping events. Constraint~[22] states that the date of the event mapping vertex~$u$ for index~$i=1$ should be strictly larger than the date of the last event mapping the parent~$p(u)$ of~$u$. Constraints~[23] and~[24] ensure that $\tau_{u,i+1} \geq \tau_{u,i}$ and that equality holds precisely if~$u$ is not mapped anywhere for index~$i+1$.

\begin{flalign*}
[22]\quad& {\tau_{u,1} - \tau_{p(u),\ell} \geq \epsilon} &&\forall\text{ } u\in  V(G)\\
[23]\quad& \tau_{u,i+1} - \tau_{u,i} - \sum_{x\in  V(S)} \sum_{e\in \E} \alpha_{u,x,i+1,e} \geq 0												             &&\forall\text{ } u\in  V(G), i\in \mathcal{I}\setminusC\{\ell\}\\
[24]\quad& \tau_{u,i+1} - \tau_{u,i} - T_{\max} \sum_{x\in  V(S)} \sum_{e\in \E} \alpha_{u,x,i+1,e} \leq 0												&&\forall\text{ } u\in  V(G), i\in \mathcal{I}\setminusC\{\ell\}
\end{flalign*}

Constraints~[25] and~[26] below make sure that the time of an event is in-between the time of the corresponding node~$x$ of the species tree and the time of its parent. If the event is of type \cont, \S or \SL, then the time of the event must be equal to the time of~$x$ (Constraint~[27]), and if the event is of type \D, \T or \TL, then the time of the event must be strictly smaller than the time of~$x$  (Constraint~[28]), {since the latter events happen on edges rather than nodes of the species tree. Note that we do not require here that $\tau_{u,i}$ is strictly greater than $t_{p(x)}$ because this is already implied by Constraint~[22]}. {Let~$V^+(S):=V(S)\setminusC\{\rho(S)\}$.}

\begin{flalign*}
[25]\quad& \tau_{u,i} - t_x + T_{\max} \sum_{e\in \E} \alpha_{u,x,i,e} \leq T_{\max} && \forall\text{ } u\in  V(G),x \in  V(S), i\in \mathcal{I}\\
[26]\quad& t_{p(x)} - \tau_{u,i} + T_{\max} \sum_{e\in \E} \alpha_{u,x,i,e} \leq T_{\max} && \forall\text{ } u\in  V(G), x \in  {V^+(S)}, i\in \mathcal{I}\\
[27]\quad& t_x - \tau_{u,i} + T_{\max} \sum_{e\in \{\cont,\S,\SL\}} \alpha_{u,x,i,e} \leq T_{\max} && \forall\text{ } u\in  V(G),x \in  {V^+(S)}, i\in \mathcal{I}\\
[28]\quad& \tau_{u,i} - t_x + T_{\max} \sum_{e\in \{\D,\T,\TL\}} \alpha_{u,x,i,e} \leq T_{\max} - \epsilon && \forall\text{ } u\in  V(G),x \in  {V^+(S)}, i\in \mathcal{I}
\end{flalign*}

We can now model \T events. First, Constraint~[29] states that leaves cannot be mapped in \T events. Constraint~[30] enforces that if~$u$ is mapped to~$x$ in a \T event, then one child of~$u$ must also be mapped to~$x$. Say that the other child of~$u$ is mapped to~$y$, then we must have~$y\in U(x)$, with~$U(x)$ the set of nodes that are not a descendant and not an ancestor of~$x$. This is ensured by Constraints~[31] and~[32].

\begin{flalign*}
[29]\quad&  \alpha_{u,x,i,\T} = 0			&	\hspace{-1cm}  \forall\text{ } u\in L(G), x\in  V(S), i \in \mathcal{I}&\\
[30]\quad&  \sum_{i \in \mathcal{I}} \alpha_{u,x,i,\T} - \sum_{e\in \E} (\alpha_{{u}_l,x,1,e} + \alpha_{{u}_r,x,1,e}) \leq 0& \forall\text{ } u\in  I(G), x\in V(S) &\\
[31]\quad& \sum_{i \in \mathcal{I}} \alpha_{u,x,i,\T} + \sum_{e\in \E} \alpha_{{u}_l,x,1,e} - \sum_{e\in \E} \sum_{y\in U(x)} \alpha_{{u}_r,y,1,e} \leq 1	&  \forall\text{ } u\in  I(G), x\in V(S) &\\
[32]\quad& \sum_{i \in \mathcal{I}} \alpha_{u,x,i,\T} + \sum_{e\in \E} \alpha_{{u}_r,x,1,e} - \sum_{e\in \E} \sum_{y\in U(x)} \alpha_{{u}_l,y,1,e} \leq 1	 & \forall\text{ } u\in  I(G), x\in V(S) &
\end{flalign*}

Constraints~[33] and~[34] enforce that if~$c\in\{u_l,u_r\}$ is mapped to~$y$ for index~$i=1$, then {$t_{p(y)} < \tau_{u,i}< t_y$}\footnote{{Actually, for keeping the ILP formulation as simple as possible, Constraints~[33] and~[34] enforce that  $t_{p(y)} \leq \tau_{u,\ell}\leq t_{y}$. If $\tau_{u,\ell}$ is equal to $t_{p(y)}$ (or to $t_{y}$), we can simply add (or subtract) $\epsilon/2$ to $\tau_{u,\ell}$ to satisfy the
strict inequality.}\label{footnote2}} . If~$u$ is mapped by an event of type~$\T$ for index~$i=\ell$, then this restriction is necessary to enforce time-consistency. Note that, if the last event mapping~$u$ is an event not of type~$\T$, then a child of~$u$ can only be mapped to either~$y$ or to a child of~$y$. In both cases, the restriction is valid. Since the restriction is always fulfilled for any event that is not a~$\T$,  for simplicity we impose it for all event types.

\begin{flalign*}
 [33]\quad&\tau_{u,\ell} - t_y + T_{\max} \sum_{e\in \E} \alpha_{c,y,1,e} \leq T_{\max} & \hspace{-1.5cm}  \forall\text{ } u\in  V(G), c\in\{u_l,u_r\},y\in  V(S) &\\
[34]\quad& t_{p(y)} - \tau_{u,\ell} + T_{\max} \sum_{e\in \E} \alpha_{c,y,1,e} \leq T_{\max} & \hspace{-1.5cm} \forall\text{ } u\in  V(G), c\in\{u_l,u_r\},y \in  {V^+(S)}
\end{flalign*}

The next two constraints model \SL events. Constraint~[35] states that leaves cannot be mapped in \SL events. Constraint~[36] enforces that if~$u$ is mapped to~$x$ for index~$i$ in an \SL event, then~$u$ must be mapped to one of the children of~$x$ for index $i+1$.

\begin{flalign*}
[35]\quad& \alpha_{u,x,i,\SL} = 0												&\forall\text{ } u\in  V(G),x \in  L(S), i\in \mathcal{I}\\
[36]\quad& \alpha_{u,x,i,\SL} - \sum_{e\in \E} ( \alpha_{u,x_l,i+1,e} + \alpha_{u,x_r,i+1,e} ) \leq 0						&\forall\text{ } u\in  V(G),x\in  I(S),  i \in \mathcal{I}\setminusC\{\ell\}\\
%&  {\alpha_{u,x,\ell,\SL} = 0}												&&\forall\text{ } u\in  V(G), x \in  V(S)
\end{flalign*}

We now formulate the constraints that model \TL events. If~$u$ is mapped to~$x$ for index~$i$ in a \TL-event, and if~$u$ is mapped to~$y$ for index~$i+1$, then we must have that $t_{p(y)} < \tau_{u,i}< t_{y}$ {(see footnote \ref{footnote2})}. This is enforced by Constraints~[37] and~[38]. Note that this holds automatically for \SL events, hence we do not need to restrict it to \TL events. Constraint~[39] enforces that~$y\notin U(x)$.

\begin{flalign*}
[37]\quad& \tau_{u,i} - t_y + T_{\max} \sum_{e\in \E} \alpha_{u,y,i+1,e} \leq T_{\max} && \forall\text{ } u\in  V(G),y\in  V(S), i\in \mathcal{I}\setminusC\{\ell\}\\
[38]\quad& t_{p(y)} - \tau_{u,i}  + T_{\max} \sum_{e\in \E} \alpha_{u,y,i+1,e} \leq T_{\max} && \forall\text{ } u\in  V(G),y \in {V^+(S)}, i\in \mathcal{I}\setminusC\{\ell\}\\
[39]\quad& \alpha_{u,x,i,\TL} - \sum_{y\in U(x)} \sum_{e\in \E} \alpha_{u,y,i+1,e}  \leq 0						&&\forall\text{ } u\in  V(G), x\in V(S), i\in \mathcal{I}\setminusC\{\ell\}
%&  {\alpha_{u,x,\ell,\TL} = 0}												&&\forall\text{ } u\in  V(G), x \in  V(S)
\end{flalign*}

Finally, we need the bounds and integrality constraints. Even though the time-variables could be restricted to be integer, this is not necessary.

\begin{flalign*}
[40]\quad& 0\leq t_x \leq T_{\max} &&\forall\text{ }x \in  V(S)\\
[41]\quad&  0\leq \tau_{u,i} \leq T_{\max} &&\forall\text{ } u\in  V(G), i \in \mathcal{I}\\
%& t_s \in \Z &&\forall\text{ }\in  V(S)\\
[42]\quad&  \alpha_{u,x,i,e}\in\{0,1\}			     								&&\forall\text{ } u\in  V(G),x \in  V(S), i \in \mathcal{I}, e\in \E
\end{flalign*}

This concludes the {ILP} formulation. 

For example, it can be easily checked that the given assignment of binary variables and dates in Example~\ref{ex:ILP} fulfills all the constraints of the ILP formulation for the trees in Figure \ref{Fig:trees} and thus is a valid {dated reconciliation} for these trees.

\section{Results and discussion}
\subsection{Implementation}

The \ilpeace method has been implemented in Java {and made publicly available~\cite{ilpeace}}. It generates an ILP formulation which is then solved by {the ILP solver} CPLEX \cite{ilog2013cplex}, {which is} a state-of-the-art ILP solver built upon a polyhedral \emph{branch, bound and cut} core \cite{mitchell2002branch}. {\ilpeace first computes an approximate solution} by setting the number of~\TL events to~0. This can be achieved by adding a constraint
\[
[43]\quad \sum_{u\in V(G)}\sum_{x\in V(S)}\sum_{i\in\mathcal{I}} \alpha_{u,x,i,\TL} = 0.
\]
Practical experiments show that an optimal solution to this restricted problem can usually be computed relatively quickly by CPLEX and that it provides a very good upper bound {(see Section~\ref{sec:results})}. {We call this the \emph{no-$\TL$-bound}.} The corresponding solution is then given to CPLEX as a ``warm start'' (i.e. an upper bound)  {from which CPLEX can search for the true optimum}. Once an optimal reconciliation has been found,
\ilpeace outputs the reconciliation score, number of events of each type, computation time, and the following files:  a file containing a subdivision of the species tree and one containing the gene tree (in both files each node is associated {with} an id), the reconciliation in a format compatible with the reconciliation editor SylvX \cite{SylvX} and finally a file containing  a phylogenetic network in eNewick format  \cite{eNewick} obtained by adding transfer edges to the species tree (as indicated by the optimal reconciliation). 

Note that, once the optimal dating is return by \ilpeace, a graph containing all optional solutions can be constructed using the software presented in \cite{scornavaccarepresenting}, an implementation of which is available at \url{http://mbb.univ-montp2.fr/MBB/subsection/downloads}. (The ILP solver returns only a single optimal dating, but modern ILP
solvers such as CPLEX offer access to ``solution pools'' from which alternative optima can be sampled. As with many optimization problems there can be exponentially many optima so generation of them all is challenging. However, \ilpeace does of course compute the \emph{score} of the most parsimonious reconcilation, which is an essential first step towards understanding the structure of this optimal space.)

\subsection{Validation on simulated data}\label{subsec:val}
We validated the  \ilpeace method by comparing its performance  {on simulated data to a brute force approach}. 
Note  that, since the aim of this section is to compare the performance of these two methods, we are not concerned here by retrieving the true reconciliation neither the true \DTL events. The comparsion with a brute force approach is
reasonable because it is unclear how one could develop, for example, an ad-hoc branch and bound algorithm to prune the search space intelligently; this is because the underlying mathematical model is far more complex than more classical combinatorial optimization problems. (Indeed, this complexity is a major reason why we chose to address the problem with ILP.) We also chose not to compare \ilpeace to the superficially similar \textsc{RANGER-DTL} software from \cite{Bansal2012} or the \textsc{JANE} algorithm from \cite{CONOW2010}. \textsc{RANGER-DTL} does not guarantee time-consistency (and does not report whether its optimal solution is time-consistent). \textsc{JANE} uses a slightly different event model to the \DTL model used here and we have observed that \textsc{JANE} sometimes returns strictly less
parsiminious solutions than \ilpeace. It is unclear whether this is due to model differences or generation of suboptimal solutions because \textsc{JANE} is a heuristic and offers no formal guarantees that the solutions it finds are optimal (even within its own model).

The brute force approach consists of three steps. Step~1 is to find all possible orderings for the nodes of the undated species tree. This is done by generating all possible linear extensions (i.e. total orders) of the partial ordering implied by the topology of the species  tree \cite{PruesseRuskey94}. Step~2 is to apply, for each  thus obtained dated version of the species tree, the \DTL reconciliation method \cite{doyon2011efficient,scornavaccarepresenting}. Step~3 is to return an ordering minimizing the reconciliation cost over all possible orderings of the species tree.

Note that, to try all possible orderings for the species tree nodes, and thus to solve the undated \DTL reconciliation problem, we would in principle have to loop through all orderings that extend the partial ordering given by the species tree, not only the total ones. Luckily, it can be proven that looping through all total orderings is enough to find a most parsimonious dating of~$S$. The full proof has been omitted and we give here a short sketch. For each two nodes $s,s'$ of $S$ such that $\theta_S(s)=\theta_S(s')$, the value of~$\theta_S(s')$ can be modified to~$\theta_S(s')-\epsilon$ for some small~$\epsilon>0$. If~$\epsilon$ is chosen small enough -- i.e. such that no other date of $S$ falls in the interval $[\theta_S(s')-\epsilon,\theta_S(s')]$ -- this can be shown not to  {negatively} affect the reconciliation. Doing so for each pair of nodes with the same value $\theta_S(\cdot)$  leads to a date function~$\theta'_S$ for which $\theta'_S(s)\neq \theta'_S(s')$ whenever~$s\neq s'$. Such a date function can easily be turned into a total ordering.

\subsubsection{Gene tree simulation}

To simulate  gene trees along a dated  species tree $S$, we start by associating to every branch of $S$ an \emph{activity} $a$  that represents the overall rate at which \DTL  events occur on this branch, along with specific rates for each individual event type $r_{\D},r_{\T},r_{\L}$, with $a=r_{\D}+r_{\T}+r_{\L}$. 
We then use a birth-and-death process \cite{Kendall48} to simulate each gene tree using a scheme similar to what done in \cite{coev}:  

\begin{enumerate}
\item at the beginning of the process, the first gene of the birth-and-death process is located at the root of $S$;
\item at any time, the time $t_{next}$ of the next potential \DTL event in every existing gene is calculated by simulating an exponential variable with parameter equal to the activity of the branch $(x,y)$ containing that gene. 
Then, if $t_{next}\geq \theta_{S'}(y)$, the next event is determined to be a \cont event if $y$ is a leaf, and an \S event otherwise. If $t_{next} < \theta_{S'}(y)$, the next event is a \DTL event and we rely on the relative rates $r_{\D},r_{\T},r_{\L}$ to determine its type.
\item we repeat this process until we reach the time of the extant species. 
\end{enumerate}
Only gene trees with at least 2 leaves are retained. 

For the simulations, we chose as dated species tree a phylogeny of 37 proteobacteria over a period of 500 million years \cite{david2010rapid}. To be able to apply the brute force approach, we were forced to run the gene tree simulation using  a version of this phylogeny  {restricted} to 17 leaves. 

The duplication, transfer and loss rates were generated for each simulated gene independently and were chosen in accordance with real dataset observations \cite{david2010rapid} using the same scheme as \cite{nguyen2013reconciliation}: the loss rate was randomly chosen in the interval [0.001, 0.0018], where the units are events per gene per million years; the ratio between the ``birth'' rate (sum of the duplication and transfer rates) and the loss rate was randomly chosen in the interval [0.5,1.1]; ultimately the proportion of the duplication rate to the birth rate was randomly chosen in the interval [0.7,1]. We chose to simulate gene trees using a scheme similar to the one used in \cite{coev} rather than the one used  in \cite{nguyen2013reconciliation} to avoid to simulate sequence alignments, computationally costly and not needed in this paper.

We simulated 100 gene trees and we applied both the \ilpeace and the  brute force approach to each pair species tree - gene tree, using the following cost vector: $c_\D=2$, $c_\T=3$ and $c_\L=1$ (and $c_\SL=1$, $c_\TL=4$ and $c_\S=c_\cont=0$).  {We ran the experiments
on a 3.2 GHz Intel Core i3 processor with 8 GB of RAM, using CPLEX 12.5 as the ILP solver.}

\subsubsection{Results}\label{sec:results}
The results of the simulations are shown in Table~\ref{tab}. 

\begin{table}[h!]
\caption{Minimum, average and maximum values for the number of taxa in the gene trees, reconciliation scores, running times of the brute force approach, the no-$\TL$-bound and \ilpeace, and the approximation ratio of the no-$\TL$-bound.  {Running times are in seconds.}\label{tab}}
\begin{center}
\begin{tabular}{l|c|c|c|c|c|c|}
\cline{2-7}
& & & \multicolumn{3}{c|}{running time} & app. ratio no-$\TL$\\
\cline{4-6}
& taxa & score & brute force & no-$\TL$ & \ilpeace &  over the true optimum\\
\hline
%min. & 2 & 0 & 384 & 0.7}& 1.8 & 1\\
%avg. & 11.2 & 12.2 & 748.1 & 7.9 & 50.9 & 1.003\\
%max. & 20 & 40 & 1968 & 1130 & 5771 & 1.429\\
\hspace{-3.1mm} \vline \hspace{2.4mm} min. & 2 & 0 & 384 & 2.3 & 6 & 1\\
\hspace{-3.1mm} \vline \hspace{2.4mm} avg. & 11.2 & 12.2 & 748.1 &  5.66 & 18.6 & 1.0007\\
\hspace{-3.1mm} \vline \hspace{2.4mm} max. & 20 & 40 & 1968 & 24.3 & 185.6 & 1.07\\
\hline
\end{tabular}
\end{center}
\label{default}
\end{table}

 {As can be seen from the table, \ilpeace is on average between one and two orders of magnitude faster than the brute force algorithm. Note also that the no-$\TL$-bound, which appears to be significantly easier for CPLEX to compute than
the true optimum, is nevertheless an extremely good approximation of the true optimum.}

%\subsection{Validation on real data}
%%%%%%%%%%%%%%%%%%%%%%%%%%%%%%%%%%%%%%%%%%%%%%%%%%%%%%%%%%%%%%%%%%%%%%%%%%%%%%%%%%%%%
%
%     please remove the " % " symbol from \centerline{\inludegraphics{fig01.eps}}
%     as it may ignore the figures.
%
%%%%%%%%%%%%%%%%%%%%%%%%%%%%%%%%%%%%%%%%%%%%%%%%%%%%%%%%%%%%%%%%%%%%%%%%%%%%%%%%%%%%%%

\section{Conclusions}

We have shown that in the absence of complete dating information for the species tree, ILP can be a powerful tool
for computing a most parsimious temporally feasible reconciliation ranging over the space of all possible datings. This is the first algorithm with such properties, since competing approaches either fail to guarantee optimality (i.e. that the solution is most parsimonious) or feasibility (i.e. that the solution is time consistent). For the trees we used (up to 20 taxa) the ILP formulation was typically 10-100 times faster than the obvious brute force algorithm. 
This is significant given that it is far from obvious how to develop an \emph{ad hoc} pruning algorithm to substantially improve upon the brute force approach. The average running time of the ILP formulation was 18 seconds.
For larger trees (say, 30 taxa or more, which is already far out of reach of a naive brute force algorithm) the performance of \ilpeace unfortunately begins to deteriorate. It will still terminate quickly in many cases but worst-case running times start to explode. Nevertheless, we are optimistic that \ilpeace represents an important proof-of-concept, for the following reasons. Firstly, ongoing research into an improved ILP formulation is likely to yield a significant improvement in running times; {the most obvious starting point is to understand why
 $\TL$ events are a bottleneck to fast ILP execution.} {Indeed, in the combinatorial optimization literature it is standard practice to introduce an initial ILP formulation which then in subsequent publications is steadily refined, see e.g. \cite{lancia2009set}}. Secondly, ILP is a very attractive approach because of its inherent flexibility: extra knowledge (such as partial dating information) can easily be incorporated into the
formulation. Moreover, the formulation can easily be extended to allow simultaneous reconciliation
of many gene trees with a given species tree.  As mentioned in the introduction this will potentially enable us to use several gene trees to impose a ``most parsimonious dating'' on a single species tree.

%%%%%%%%%%%%%%%%%%%%%%%%%%%%%%%%%%%%%%%%%%%%%%
%%                                          %%
%% Backmatter begins here                   %%
%%                                          %%
%%%%%%%%%%%%%%%%%%%%%%%%%%%%%%%%%%%%%%%%%%%%%%

%\begin{backmatter}

\section*{Competing interests}
  The authors declare that they have no competing interests.

\section*{Author's contributions}
LvI, CS and SK designed the model and solution approach, verified its correctness, designed and conducted the experiments and wrote the paper. LvI implemented the software.

\section*{Acknowledgements}
The authors would like to thank Eric Tannier and Bastien Boussau for fruitful discussions. 
%%%%%%%%%%%%%%%%%%%%%%%%%%%%%%%%%%%%%%%%%%%%%%%%%%%%%%%%%%%%%
%%                  The Bibliography                       %%
%%                                                         %%
%%  Bmc_mathpys.bst  will be used to                       %%
%%  create a .BBL file for submission.                     %%
%%  After submission of the .TEX file,                     %%
%%  you will be prompted to submit your .BBL file.         %%
%%                                                         %%
%%                                                         %%
%%  Note that the displayed Bibliography will not          %%
%%  necessarily be rendered by Latex exactly as specified  %%
%%  in the online Instructions for Authors.                %%
%%                                                         %%
%%%%%%%%%%%%%%%%%%%%%%%%%%%%%%%%%%%%%%%%%%%%%%%%%%%%%%%%%%%%%

% if your bibliography is in bibtex format, use those commands:

\bibliographystyle{bmc-mathphys}
% Style BST file

%\bibliographystyle{authordate}
%\bibliographystyle{plainnat}

\bibliography{ILPEACE}      % Bibliography file (usually '*.bib' )

\end{document}